\newtheoremstyle{colon}%
{}
{}
{\itshape}%bodyfont
{}%indent
{\bfseries}%headfont
{:}%head punctuation
{ }%space after head
{}
\theoremstyle{colon}
\xpatchcmd{\proof}{\@addpunct{.}}{\@addpunct{:}}{}{}
\xpatchcmd{\proof}{\@addpunct{.}}{\@addpunct{:}}{}{}
\newtheorem{theorem}{Theorem}
\newtheorem{remark}{Remark}
\newcommand{\bs}{\text{BS}}
\newcommand{\cu}{\text{CU}}
\newcommand{\uk}{U_k}
\newcommand{\bd}{\text{BD}}
\newcommand{\FoxH}{$\mathrm{H}^{\cdot}_{\cdot}
      \left(\cdot  \left|\begin{smallmatrix} \dotsb\\ \dotsb
      \end{smallmatrix}\right.  \right)\;$}
\newcommand{\BiHfox}{$\mathrm{H}^{\cdot;\cdot;\cdot}_{\cdot;\cdot;\cdot}
      \left( \begin{smallmatrix}\dotsb\\\dotsb
      \end{smallmatrix}\left|\begin{smallmatrix} \dotsb\\ \dotsb
      \end{smallmatrix}\right.
      \left| \begin{smallmatrix} \dotsb\\\dotsb
      \end{smallmatrix}\right.  \left|
      \cdot,\cdot\right.
      \right)\;$}
\begin{document}
\title{\huge{A Novel Paradigm Shift for Next-Generation: Symbiotic Backscatter Rate-Splitting Multiple Access Systems}

    \thanks{This work was supported by the Research Program through the National Research Foundation of Korea under Grant NRF-2023R1A2C1003546.}
}
\author{\IEEEauthorblockN{Thai-Hoc Vu$^{1}$, Daniel Benevides da Costa$^{2}$, Bao Vo Nguyen Quoc$^{3}$, and Sunghwan Kim$^{1}$}
\IEEEauthorblockA{
$^{1}$Department of Electrical, Electronic and Computer Engineering, University of Ulsan, Republic of Korea\\
$^{2}$Department of Electrical Engineering, King Fahd University of Petroleum $\&$ Minerals, Dhahran 31261, Saudi Arabia \\
$^{3}$Van Lang School of Technology, Van Lang University, Vietnam\\
Emails: vuthaihoc1995@gmail.com, danielbcosta@ieee.org, bao.vnq@vlu.edu.vn, sungkim@ulsan.ac.kr}}
\maketitle

\begin{abstract}
Next-generation wireless networks are projected to empower a broad range of Internet-of-things (IoT) applications and services with extreme data rates, posing new challenges in delivering large-scale connectivity at a low cost to current communication paradigms. Rate-splitting multiple access (RSMA) is one of the most spotlight nominees, conceived to address spectrum scarcity while reaching massive connectivity. Meanwhile, symbiotic communication is said to be an inexpensive way to realize future IoT on a large scale.  To reach the goal of spectrum efficiency improvement and low energy consumption, we merge these advances by means of introducing a novel paradigm shift, called symbiotic backscatter RSMA, for the next generation. Specifically, we first establish the way to operate the symbiotic system to assist the readers in apprehending the proposed paradigm, then guide detailed design in beamforming weights with four potential gain-control (GC) strategies for enhancing symbiotic communication, and finally provide an information-theoretic framework using a new metric, called symbiotic outage probability (SOP) to characterize the proposed system performance. Through numerical result experiments, we show that the developed framework can accurately predict the actual SOP and the efficacy of the proposed GC strategies in improving the SOP performance.
\end{abstract}
\begin{IEEEkeywords}
symbiotic backscatter radio, rate-splitting multiple access, next generation, performance analysis. 
\end{IEEEkeywords}

\section{Introduction}
Recent advances powered by the Internet-of-Things (IoT) drive human lives more convenient with emerged applications (e.g., home automation or driverless vehicles) and services (e.g., telemetry, healthcare, or the metaverse). Particularly, communication over the air plays a key role in managing and coordinating information between entities. However, the headway of IoT brings major challenges to the development of next-generation wireless infrastructure, of which spectrum scarcity and energy consumption are two critical issues  \cite{DeAlwis2021Apr}.
 
In the context of spectrum scarcity, several technologies have been introduced to overcome this challenge, one of which is non-orthogonal multiple access (NOMA). Compared to classical orthogonal multiple access (OMA), NOMA benefits the system in terms of time and frequency resources. However, the complex process of pairing users, the need to share user signals, and the unfairness in successive interference cancellation (SIC) approaches make it impractical. Despite significant efforts to promote NOMA for the next generation through 3GPP proposals \cite{Makki2020Jan}, the emergence of rate-splitting multiple access (RSMA) nominees in recent years has led to this technology gradually being out of the game. Specifically, RSMA brings more pleasing dominant features than NOMA, such as increased achievable rate, fairness in SIC use, and user confidentiality \cite{Clerckx2023Feb}. This is achieved by exploiting message separation and multiplexing signals in the power domain at the transmitter while leveraging the attribute brought by SIC to enable the receivers to decode the common message and their individual ones without the information knowledge of the others as in NOMA. RSMA has been therefore devised as the limelight candidate for the 6G, with a series of studies for joint radar and communications \cite{Xu2021Sep}, aerial-assisted terrestrial networks \cite{VuAug}, intelligent reflecting surface \cite{Hua2023May}, and Terahertz communication \cite{Vu2023Sep}. Particularly, the feasibility of RSMA has been verified with the first-ever prototype in \cite{Lyu2024Mar}.

On the other hand, research on green wireless communication, especially in IoT networks, has been growing due to the increasing popularity of limited-energy IoT devices. In light of radio frequency (RF) availability, affordable energy harvesting solutions, represented by wireless power transfer communication networks and simultaneously wireless information power transfer \cite{Lu2014Nov}, are familiar but often consume high power, making them difficult to integrate into large-scale IoT. To sidestep energy limitations coupled with severe spectrum shortages, symbiotic radio (SR) have emerged as a new breath of fresh air for the passive IoT era \cite{Long2019Nov}. In particular, this  paradigm allows backscatter devices ($\bd$), which are live parasitic in the primary network, to exchange information with their secondary receiver via the legitimate frequency band without energy-cost payment. Hence, the research on SR adoptions has received significant interest in recent years, with studies on SR communication and sensing \cite{Ren2023Mar}, static SR NOMA \cite{Zhang2019Feb}, and dynamic SR NOMA \cite{Yang2023Mar}. 

In this paper, we introduce a novel paradigm shift for next-generation wireless communication systems to enhance spectrum utilization, referred to as symbiotic backscatter RSMA systems of cellular and IoT networks. In this paradigm, base-station ($\bs$) uses RSMA signalling to serve cellular users ($\cu$s) in the same resource block simultaneously, whereas a $\bd$ passively modulates its own information by switching its antenna impedance and reflecting the free-RF signal emitted by $\bs$. From a point of view, the proposed paradigm lines two typical systems: RSMA for cellular and sensing for IoT networks. Interestingly, if the $\bd$ does not work, the proposed system turns into the conventional RSMA system. However, the proposed system differs from traditional RSMA and ambient backscatter communication (BackCom) systems. Therefore, the existing measurement metrics, optimization algorithms, and analytical frameworks cannot be directly applied to evaluate the proposed symbiotic backscatter RSMA system. To the best of the authors' knowledge, investigations of this novel paradigm have not been discussed in the literature in terms of both analysis and optimization. These motivate us to delve into finding the answer to three questions: 
\begin{enumerate}
    \item \textit{How to establish symbiotic RSMA communication?}
    \item \textit{How to provide sustainable symbiotic communication?}
    \item \textit{How to evaluate the symbiotic system's performance?}
\end{enumerate}

With the efforts in enhancing next-generation wireless communication systems, our main contribution in this paper can be summarized as follows:
\begin{itemize}
    \item A novel paradigm shift for the next generation, named symbiotic backscatter RSMA, is proposed to not only enhance spectrum utilization but also provide a sustainable communication medium with low energy consumption.

    \item A novel beamforming approach is designed to facilitate the symbiotic of two distinct networks, followed by four gain control (GC) strategies based on channel gain assignment to boost the performance of the proposed system, including random channel selection (RCS), smallest channel selection (SCS), maximal channel selection (MCS), and composite channel selection (CCS).

    \item An information-theoretic framework using a new metric, called symbiotic outage probability (SOP),  is provided for evaluating the proposed system's performance.

    \item Numerical results are presented to validate the developed framework, to compare the performance of four proposed strategies, as well as to inspect the impact of the systems' key parameters on the SOP performance.
    
\end{itemize}

% The rest of this paper is outlined as follows. Section~\ref{sec2} establishes the system model with details in beamforming design methods, followed by the information-theoretic performance framework in Section~\ref{sec3}.  Section~\ref{sec4} presents numerical results and insightful discussion. Section~\ref{sec5} ends the paper with new key findings and exciting directions for future investigations.

\textit{Notations}: \FoxH 
 denotes the Fox-H function \cite[eq. (9.301)]{integral}. \BiHfox 
 denotes the generalized bivariate Fox-H function \cite[eq. 2.2.1]{ismail1980h}.  $\Gamma(\cdot,\cdot)$ denotes the upper incomplete Gamma function \cite[eq. (8.350.2)]{integral}.    

\section{System Model}\label{sec2}
\begin{figure}
    \centering
    \includegraphics[width =0.9\linewidth]{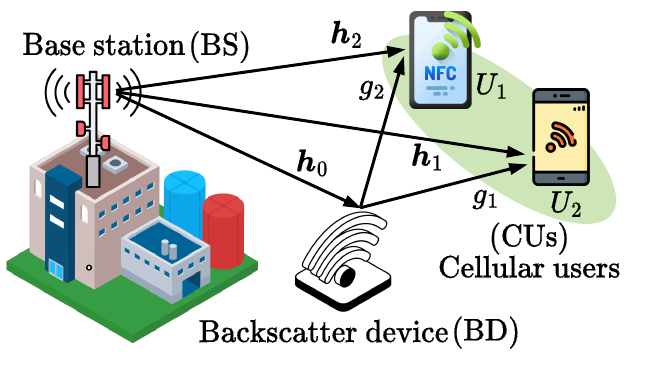}
    \caption{Illustration of the system model.}
    \label{fig1}
\end{figure}

Fig.~\ref{fig1} illustrates a symbiotic network composed of one $\bs$ equipped with $L\ge 2$ antennas, two single-antenna $\cu$s, and one $\bd$ having signal processing ability. The $\bs$ serve two $\cu$s through RSMA signaling, whereas $\bd$ passively communicates by reflecting and modulating the received incident RF signal emitted by $\bs$ to ride on its information. The considered system acts in quasi-static Rayleigh block fading with the same frequency band for both uplink and downlink. 

Denote by $\mathcal{L}= \{1,...,L\}$ and $\mathcal{K}= \{1,2\}$ the set of $\bs$ antennas and $\cu$s, respectively, while $\uk$ is the $k$-th $\cu$, with $k\in\mathcal{K}$. For each fading block, the channel coefficient vectors of links $\bs \to \bd$, $\bs \to \uk$, and $\bd \to \uk$ are given by $\boldsymbol{h}_0 \in \mathbb{C}^{L\times 1}$, $\boldsymbol{h}_k \in \mathbb{C}^{L\times 1}$, and $g_k$, respectively. Under Rayleigh mediums, $|h_{0,l}|^2$, with $l\in\mathcal{L}$, $|h_{k,l}|^2$, and $|g_k|^2$ under go the exponential distribution with the respective parameters $\lambda_0 $, $\lambda_k$, and $\Omega_k$. The probability density function (PDF) of $|g_k|^2$, for example, is given by $f_{|g_k|^2}(z) = \Omega_k\exp\left(-\Omega_kz\right), \forall z\ge 0.$

Under quasi-static fading block conditions, the channel state information of $\boldsymbol{h}_0$, $\boldsymbol{h}_k$, and $g_k$ can be easily estimated thanks to two phases of uplink pilot training. In the first phase, $\bd$ absorbs the signal completely with reflection, facilitating $\bs$ to attain $\boldsymbol{h}_k$. In the second phase, $\bd$ enters the fixed impedance mode with predefined symbol $c_b$, the backscatter efficiency $\eta\in(0,1]$, and reflection coefficient $\delta\in(0,1]$. Using the training pilots, $\bs$ can obtain $\boldsymbol{h}_0g_k$ by subtracting $\boldsymbol{h}_k$ from the composite channel $\boldsymbol{h}_k + \sqrt{\zeta\delta}c_b\boldsymbol{h}_0g_k$. Afterwards, $\bs$ can acquire $\boldsymbol{h}_0$ using the expectation-maximization algorithm. Finally, $\boldsymbol{h}_0$ and $\boldsymbol{h}_k$ will be used to design beamforming weights, resource allocation, and data decoding at $\cu$s.

\subsection{Transmission Model}
Initially, $\bs$ divides $\cu$s' messages into the common and private parts, then encodes all common parts into a unique stream $s_c$, and finally encodes the private ones into the individual streams $s_k$ separately, with $\mathrm{E}\{|s_c|^2\} =\mathrm{E}\{|s_k|^2\} =1$. Let $p$ denotes the transmit power of the $\bs$ with power allocation (PA) factors $\alpha_c$ and $\alpha_k$ to $s_c$ and $s_k$, respectively, with  $\alpha_c + \sum_{k\in\mathcal{K}}\alpha_k=1$. Then, the superposition message sent by $\bs$ at the $t$-th time slot can be written as 
\begin{align}
\label{eq:01}
    \boldsymbol{s}(t) = \boldsymbol{w}_cs_c(t) \sqrt{\alpha_c p} + \sum\nolimits_{k}\boldsymbol{w}_ks_k(t)\sqrt{\alpha_k p},
\end{align}
where $\boldsymbol{w}_c \in \mathbb{C}^{L\times 1} $ and $\boldsymbol{w}_k \in \mathbb{C}^{L\times 1}$ are the beamforming weights. At $\uk$, the composite signal from the direct and backscatter links under the additive white Gaussian noise $u_k(t)$ with zero-mean and variance $\sigma^2$ can be written as
\begin{align}
\label{eq:02}
    y_k(t) = \boldsymbol{h}_k^H\boldsymbol{s}(t) + \sqrt{\eta\delta}\boldsymbol{h}_0^H\boldsymbol{s}(t)g_kc_b(t) + u_k(t).
\end{align}

Normally, the backscattered signal from $\bd$ is subject to multiplicative fading from the cascaded link $\boldsymbol{h}_0^H\boldsymbol{s}(t)g_k$ and power losses in reflections. Thus, it is feasible for $\uk$ to decode $s_c(t), s_k(t)$, and $c_b(t)$ consecutively with the aid of SIC mechanisms. For notational brevity, we denote $j\in\{\mathcal{K}/k\}$, $v_c \triangleq \alpha_c|\boldsymbol{h}_k^H\boldsymbol{w}_c|^2$, $v_k \triangleq \alpha_k|\boldsymbol{h}_k^H\boldsymbol{w}_k|^2$, $v_j \triangleq \sum\nolimits_{j} \alpha_j|\boldsymbol{h}_k^H\boldsymbol{w}_j|^2$, $v_{0,k} \triangleq   \alpha_c|\boldsymbol{h}_0^H\boldsymbol{w}_c|^2 + \sum_{k\in\mathcal{K}}\alpha_k|\boldsymbol{h}_0^H\boldsymbol{w}_k|^2$. The \textit{signal-to-interference-plus-noise ratios} for decoding $s_c(t)$, $s_k(t)$, and $c_b(t)$ can be, respectively, written as
\begin{align}
\label{eq:03}
    \gamma_{k,c} & =\frac{pv_c}{p(v_k + v_j + \eta\delta v_{0,k}|g_k|^2) + \sigma^2},\\
\label{eq:04}
    \gamma_{k,k}  &= \frac{pv_k}{p( v_j  + \eta\delta v_{0,k}|g_k|^2) + \sigma^2},  
    \gamma_{k,b}  = \frac{ p \eta\delta v_{0,k}|g_k|^2 }{p v_j  + \sigma^2}.
\end{align}
It is intuitively found that: \textbf{i)} the presence of the component $v_j$ is harmful to decode $s_c(t)$, $s_k(t)$, and $c_b(t)$; \textbf{ii)} Maximizing the components $v_c$ and $v_k$ respectively enhances $\gamma_{k,c}$ and $\gamma_{k,k}$, thereby increasing the decoding ability of $s_c(t)$ and $s_k(t)$; and \textbf{iii)} Boosting $v_{0,k}$ helps improves $\gamma_{k,b}$, increasing the decoding ability of $c_b(t)$. However, at the same time, it also decreases $\gamma_{k,c}$ and $\gamma_{k,k}$, deteriorating the decoding ability of $s_c(t)$ and $s_k(t)$. To overcome these circumvents, we are excited to investigate designing the beamforming weights.

\subsection{Beamforming Weight Design and Gain Control}

\textit{\textbf{General Methodology}}: From the above observations, we start with the there unconstrained optimization problems:  
\begin{align}
\label{eq:05}
   &\mathrm{Pro}_1: \max_{\boldsymbol{w}_c} {v_c},\quad \mathrm{Pro}_2: \max_{\boldsymbol{w}_k} {v_k},\quad \mathrm{Pro}_3: \min_{\boldsymbol{w}_j} {v_j}.
\end{align}
To deal with the problems in \eqref{eq:05}, let us inspect $v_c$, $v_k$, and $v_j$.  The results show that $v_c$ and $v_k$ are maximized and $v_j$ is minimized if and only if the following results are satisfied
\begin{align}
    |\boldsymbol{h}_k^H\boldsymbol{w}_c| = \|\boldsymbol{h}_k\|, \quad |\boldsymbol{h}_k^H\boldsymbol{w}_k| = \|\boldsymbol{h}_k\|, \quad |\boldsymbol{h}_k^H\boldsymbol{w}_j| = 0.
\end{align}
Building upon zero-forcing (ZF) and maximum ratio transmission (MRT) approaches,  $\boldsymbol{w}_k$ and $\boldsymbol{w}_c$ can be developed as
\begin{align}
\label{eq:07}
    \boldsymbol{w}_k = \widetilde{\boldsymbol{w}}_k\|\boldsymbol{h}_k\|, \quad \boldsymbol{w}_c = \boldsymbol{w}_1 +\boldsymbol{w}_2,
\end{align}
where $\widetilde{\boldsymbol{w}}_k$ is the $k$-th column vector of a weight matrix $\boldsymbol{W}\in\mathbb{C}^{L\times 2}$, with capable of $\boldsymbol{h}_k^H\widetilde{\boldsymbol{w}}_k = 1$ and $\boldsymbol{h}_k^H\widetilde{\boldsymbol{w}}_j = 0$, followed by the add-on MRT component $\|\boldsymbol{h}_k\|$ to maximize $v_c$ and $v_k$.

Next, inspections of $v_{0,k}$ disclose that $\gamma_{k,c}$ and $\gamma_{k,k}$ only improve if $v_{0,k}$ goes to zero. However, this declines BackCom's benefits, asking for mitigating one of two components $r_k \triangleq\sum_{k}\alpha_k|\boldsymbol{h}_0^H\boldsymbol{w}_k|^2$ and $r_c\triangleq \alpha_c|\boldsymbol{h}_0^H\boldsymbol{w}_c|^2$ from $v_{0,k}$, while finding an alternative to adjust the rest. As a matter of fact, eliminating $r_k$ improves  $\gamma_{k,c}$ and $\gamma_{k,k}$ better than $r_c$, which yields the necessity of $|\boldsymbol{h}_0^H\boldsymbol{w}_k| = 0$. Hence, the structure of $\boldsymbol{W}$ is no longer to $L\times 2$ but instead extents to $L\times 3$, i.e.,
\begin{align}
    \boldsymbol{W} = [\widetilde{\boldsymbol{w}}_0, \widetilde{\boldsymbol{w}}_1,\widetilde{\boldsymbol{w}}_2] = \boldsymbol{H}^H(\boldsymbol{H}\boldsymbol{H}^H)^{-1},
\end{align}
where $\boldsymbol{H} = [\boldsymbol{h}_0,\boldsymbol{h}_1,\boldsymbol{h}_2]$. Unfortunately, the use of $\boldsymbol{w}_c$ in \eqref{eq:07} is now no longer valid since $r_c = 0$. Knowing that $|\boldsymbol{h}_0^H\boldsymbol{w}_0| = 1$, one can arrive at the new solution for $\boldsymbol{w}_c$ as
\begin{align}
    \label{eq:09}\boldsymbol{w}_c = \boldsymbol{w}_0 + \boldsymbol{w}_1 + \boldsymbol{w}_2,
\end{align}
where $\boldsymbol{w}_0 = \widetilde{\boldsymbol{w}}_0 \theta$, with $\theta$ being an add-on GC factor. 
\begin{remark}
    Note that the developed solution above is still applicable for multi-users case (i.e., $K\ge 2$) by scaling up $\boldsymbol{H} =[\boldsymbol{h}_0,\boldsymbol{h}_1,\dotsb,\boldsymbol{h}_K]$, yielding $\boldsymbol{W} = [\widetilde{\boldsymbol{w}}_0, \widetilde{\boldsymbol{w}}_1,\dotsb,\widetilde{\boldsymbol{w}}_K]$. However, this would result in a higher computational complexity for the systems. Therefore, scaling down multiple users into multiple clusters with a smaller number of users served, along with the use of hybrid OMA-RSMA, is necessary. 
\end{remark}

\textit{\textbf{GC Selection}}: With the approach above, it is interesting to further study how to design $\theta$. In what follows, we introduce four GC strategies based on channel gain assignments:
\begin{align}
\label{eq:10}
    \text{RCS}: \theta_1 &= \underset{l}{\text{ran}} |h_{0,l}| ,& \quad 
    \text{SCS}: \theta_2 &= \min_{l} |h_{0,l}|,\\
\label{eq:11}
    \text{MCS}: \theta_3 &= \max_{l} |h_{0,l}|,&\quad
    \text{CCS}: \theta_4 &= \sum_{l} |h_{0,l}|.
\end{align}

\begin{remark}\label{remark2}
    It is important to note that exploiting the strategies above comes with certain trade-offs in the system performance. RCS is the simplest method to generate $\theta$ with negligible impact on $\gamma_{k,c}$, $\gamma_{k,k}$, and $\gamma_{k,b}$. Nevertheless, it is strongly pronounced with SCS, MCS, and CCS. Particularly, SCS is the worst choice to $\gamma_{k,b}$, even though it shows the lowest interference to $\gamma_{k,c}$ and $\gamma_{k,k}$.  Meanwhile, MCS is an excellent option for enhancing $\gamma_{k,b}$ while declining $\gamma_{k,c}$ and $\gamma_{k,k}$. Similarly, CCS provides the most significant improvement to $\gamma_{k,b}$ but also reduces $\gamma_{k,c}$ and $\gamma_{k,k}$ significantly.  
\end{remark}

\section{Information-Theoretic Framework: Symbiotic Outage Probability (SOP) Metric}\label{sec3}

Prior to going into the detailed analysis, let us denote $\Psi = p/\sigma^2$ the average signal-to-noise ratio (SNR) [dB]. Next, injecting the developed weight vectors $\boldsymbol{w}_c$ in \eqref{eq:09} and $\boldsymbol{w}_k$ in \eqref{eq:07} into the formulas \eqref{eq:03} and \eqref{eq:04}  return the following results
\begin{align}
\label{eq:12}
    \gamma_{k,c} &= \frac{\alpha_c\Psi\tau_k}{\alpha_k\Psi\tau_k  + \eta\delta\Psi |\theta|^2|g_k|^2 + 1}, (\tau_k \triangleq \|\boldsymbol{h}_k\|^2)\\
\label{eq:13}
    \gamma_{k,k} &= \frac{\alpha_k\Psi\tau_k}{\eta\delta\Psi |\theta|^2|g_k|^2  + 1}, \quad
    \gamma_{k,b} =  \eta\delta\Psi |\theta|^2|g_k|^2.
\end{align}

Having obtained $\gamma_{k,c}$, $\gamma_{k,k}$, and $\gamma_{k,b}$, it is exciting to design $\delta$ to ensure simultaneous transmission of RSMA and Backcom signals, i.e., $s_c(t)$, $s_k(t)$, and $c_b(t)$ can be successfully decoded at $\uk$. Subsequently, the SIC procedure should satisfy the rule: $\gamma_{k,c} > \bar{\gamma}_{c}$, $\gamma_{k,k} > \bar{\gamma}_{k}$, and $\gamma_{k,b} > \bar{\gamma}_{b}$, where $\bar{\gamma}_{c}=2^{R_c}-1$, $\bar{\gamma}_{k} =2^{R_k}-1$, and $\bar{\gamma}_{b} = 2^{R_b}-1$ represent the respective thresholds to decode $s_c(t)$, $s_k(t)$, and $c_b(t)$, with $R_c,R_k,$ and $R_b$ being the respect transmission rate targets [bps/Hz]. Consider all $\cu$s, the range of $\delta$ can be written as
\begin{align}
\label{eq:14}
   \max\left\{\boldsymbol{\delta}_{\mathrm{low}}\right\}  < \delta < \min \left\{1,\boldsymbol{\delta}_{\mathrm{up}}\right\}, \forall k\in\mathcal{K},
\end{align}
where $\boldsymbol{\delta}_{\mathrm{up}} = \left\{ \frac{\rho_{c,k}\Psi\tau_k - 1}{\eta\Psi|\theta|^2|g_k|^2}, \frac{\rho_k\Psi\tau_k - 1}{\eta\Psi |\theta|^2|g_k|^2  }  \right\}$, with $\rho_{c,k} \triangleq (\alpha_c - \bar{\gamma}_{c}\alpha_k)/\bar{\gamma}_{c}$ and $\rho_k \triangleq \alpha_k/\bar{\gamma}_{k}$, while $\boldsymbol{\delta}_{\mathrm{low}} = \left\{\frac{\bar{\gamma}_{b}}{\eta\Psi |\theta|^2|g_k|^2} \right\}$.

Due to page constraints, analyzing all cases of $|\theta|^2$ as developed in \eqref{eq:10}-\eqref{eq:11} is infeasible. Rather, we study the most challenging case  $|\theta|^2 =  \sum_{l\in\mathcal{L}} |h_{0,l}| \triangleq \tau_0$ while treating the other cases using simulation methods (readers can apply the same method below to the remaining cases).  
Therewith, we turn to express the cumulative distribution function (CDF) of $\tau_i$, $i=\{0,k\}$ based on moment matching approach \cite{Vu2023Sep}, i.e., 
 \begin{align}
 \label{eq:15}
     F_{\tau_i}(z) = 1 \!- \frac{\Gamma(L,\lambda_i z)}{(L-1)!} \!= 1 -\sum_{m=0}^{L-1}(\lambda_i z)^m\frac{\exp(-\lambda_i z)}{m!}.
\end{align}
 
Having \eqref{eq:14} and \eqref{eq:15} in hand, the next focus is on evaluating the SOP of the system. In terms of definition, the SOP occurs when any condition of the SIC rule becomes invalid. Accordingly, the SOP can be mathematically expressed as 
\begin{align}
\label{eq:16}
    \mathrm{SOP}
    % = 1 &-\Pr[ \gamma_{k,c} > \bar{\gamma}_{c}, \gamma_{k,k} > \bar{\gamma}_{k}, \gamma_{k,b} > \bar{\gamma}_{b}]\nonumber\\
    &= 1 - \Pr\Big[\max\left\{\boldsymbol{\delta}_{\mathrm{low}}\right\}  <  \min \left\{1,\boldsymbol{\delta}_{\mathrm{up}}\right\}\Big]\nonumber\\
    &= 1 - \Pr\left[\frac{\bar{\gamma}_{b}}{\eta\Psi \tau_0|g_j|^2} \! <  \min \left\{1,\boldsymbol{\delta}_{\mathrm{up}}\right\},|g_k|^2>|g_j|^2\right]\nonumber\\
    &- \Pr\bigg[\frac{\bar{\gamma}_{b}}{\eta\Psi \tau_0|g_k|^2} \! <  \min \left\{1,\boldsymbol{\delta}_{\mathrm{up}}\right\},|g_j|^2 \!>|g_k|^2\bigg].
\end{align}

\begin{theorem}\label{Theorem1}
Exact closed-form solution for the SOP of the system can be formulated in \eqref{eq:17} on the top page, where $\beta_0 \triangleq {\bar{\gamma}_{b}\lambda_0}{/[\eta\Psi]}$, $\beta_k \triangleq {\lambda_k\pi_k\bar{\gamma}_{b}}{/[\Psi\Omega_k ]}$, $\beta_{j,k} \triangleq \Omega_ j + \Omega_k$, $\chi_k \triangleq \beta_k\Omega_k(\bar{\gamma}_{b}+1)/{\bar{\gamma}_{b}}$, and $\pi_k \triangleq \max\{1/\rho_k,1/\rho_{c,k}\}$. 
    \begin{figure*}
        \begin{align}
        \label{eq:17}
        \mathrm{SOP} =   1 &- \sum_{m=0}^{L-1}\sum_{n=0}^{L-1}\sum_{q=0}^{n}\frac{\Omega_1\beta_0^m \chi_2^q\exp\left( -\chi_2\right)\beta_{1,2}^{m-2}}{m! q!\beta_2\Gamma(n-q+1)} 
        \mathrm{H}^{0,1;1,1;0,1}_{1,0;1,1;1,0}\left(\begin{matrix}
          m-1;1,1\\
          -
      \end{matrix} 
      \left|\begin{matrix}
          (q- n,1)\\
          (0,1)
      \end{matrix} \right.
      \left|\begin{matrix}
          (1,1)\\
          -
      \end{matrix} \right. \left| \frac{1}{\beta_{2}\beta_{1,2}};\frac{1}{\beta_{0}\beta_{1,2}} \right. \right)
       \frac{1}{(L-1)!}\nonumber\\
      &\times \Gamma\left(L,( \bar{\gamma}_{b} + 1)\frac{\lambda_1 \pi_1}{\Psi}\right) 
      %%%%%
      - \sum_{m=0}^{L-1}\sum_{n=0}^{L-1}\sum_{q=0}^{n}\frac{\Omega_2\beta_0^m \chi_1^q\exp\left( -\chi_1\right)\beta_{2,1}^{m-2}}{m! q!\beta_1\Gamma(n-q+1)} \frac{1}{(L-1)!} \Gamma\left(L,( \bar{\gamma}_{b} + 1)\frac{\lambda_2 \pi_2}{\Psi}\right)
    \nonumber\\
      & \times         \mathrm{H}^{0,1;1,1;0,1}_{1,0;1,1;1,0}\left(\begin{matrix}
          m-1;1,1\\
          -
      \end{matrix} 
      \left|\begin{matrix}
          (q- n,1)\\
          (0,1)
      \end{matrix} \right.
      \left|\begin{matrix}
          (1,1)\\
          -
      \end{matrix} \right. \left| \frac{1}{\beta_{1}\beta_{2,1}};\frac{1}{\beta_{0}\beta_{2,1}} \right. \right)
       .
    \end{align}
   \hrulefill
    \end{figure*}
    
\end{theorem}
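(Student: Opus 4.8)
The strategy is to evaluate the two probabilities in \eqref{eq:16} directly by conditioning on the backscatter channel gains and integrating against the known distributions. Since the events in \eqref{eq:16} are symmetric under swapping $j \leftrightarrow k$, it suffices to treat the first term; the second follows by relabeling. First I would unpack the event $\frac{\bar{\gamma}_b}{\eta\Psi\tau_0|g_j|^2} < \min\{1,\boldsymbol{\delta}_{\mathrm{up}}\}$ into its constituent inequalities: the condition involving the constant $1$ gives a lower bound on $\tau_0|g_j|^2$, namely $\tau_0|g_j|^2 > \beta_0/\lambda_0$ (after absorbing constants into $\beta_0$), while the two conditions coming from $\boldsymbol{\delta}_{\mathrm{up}}$ translate, after clearing denominators, into inequalities of the form $\tau_k > (\bar{\gamma}_b+1)\lambda_k\pi_k/\Psi$ decoupled from the $|g|^2$ variables — this is precisely where the definition $\pi_k = \max\{1/\rho_k, 1/\rho_{c,k}\}$ comes from (taking the max turns the two separate $\boldsymbol{\delta}_{\mathrm{up}}$ constraints into one). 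That observation is the key simplification: it separates the probability into a product of (i) a factor $\Pr[\tau_k > (\bar{\gamma}_b+1)\lambda_k\pi_k/\Psi]$, which by \eqref{eq:15} is immediately $\Gamma(L,(\bar{\gamma}_b+1)\lambda_k\pi_k/\Psi)/(L-1)!$, and (ii) a joint probability over $\tau_0$, $|g_j|^2$, $|g_k|^2$.

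Next I would compute the joint factor (ii), $\Pr[\tau_0|g_j|^2 > \beta_0/\lambda_0,\ |g_k|^2 > |g_j|^2]$ — here I am being slightly schematic about which thresholds land where, but the structure is: integrate out $|g_k|^2$ first over $(|g_j|^2,\infty)$, which contributes a simple exponential $\exp(-\Omega_k |g_j|^2)$; then integrate over $|g_j|^2$, where the constraint $\tau_0|g_j|^2 > \text{const}$ produces a lower limit inversely proportional to $\tau_0$, yielding after the $|g_j|^2$-integration a term of the form $\frac{1}{\beta_{j,k}}\exp(-\beta_{j,k}\cdot\text{const}/\tau_0)$ possibly times a prefactor; and finally integrate over $\tau_0$ against the density obtained by differentiating \eqref{eq:15}, which is a finite sum (index $m$) of Gamma-type densities $(\lambda_0 z)^m \exp(-\lambda_0 z)/m!$ scaled appropriately. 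The remaining $\tau_0$-integral has the schematic shape $\int_0^\infty z^{m-1}\exp(-a z - b/z)\,dz$ after the substitutions, and more generally, because the lower limit of the $|g_j|^2$-integral also feeds an incomplete-Gamma-like dependence, one is left with an integral of a product of two exponentials with arguments $z$ and $1/z$ together with algebraic factors. Expressing such an integral via the Mellin–Barnes representation of each exponential and collapsing the double contour integral is exactly what produces the generalized bivariate Fox-H function $\mathrm{H}^{0,1;1,1;0,1}_{1,0;1,1;1,0}(\cdots)$ with the two arguments $\frac{1}{\beta_2\beta_{1,2}}$ and $\frac{1}{\beta_0\beta_{1,2}}$; the parameter lists in \eqref{eq:17} should fall out of matching the Gamma factors from the two Mellin transforms and the residual algebraic powers (the $m-1$, the $q-n$, the $(0,1)$ and $(1,1)$ entries).

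The main obstacle I anticipate is the bookkeeping in assembling the bivariate Fox-H: one must carefully track (a) the finite binomial-type expansion that arises when the lower integration limit in the $|g_j|^2$-step is itself raised to a power — this is the source of the inner sum $\sum_{q=0}^{n}$ and the $\Gamma(n-q+1)$ in the denominator, coupled to the outer sum $\sum_{n=0}^{L-1}$ from expanding the second copy of \eqref{eq:15}; and (b) the precise normalization constants $\beta_0,\beta_k,\beta_{j,k},\chi_k$ so that the Mellin–Barnes double integral matches the standard definition \cite{ismail1980h} of the generalized bivariate Fox-H term-by-term. A secondary subtlety is verifying convergence of the double Mellin–Barnes contour (the conditions on the arguments' magnitudes relative to the Gamma-function poles), which I would address by checking that the ratios $\frac{1}{\beta_i\beta_{j,k}}$ lie in the region of absolute convergence guaranteed by the parameter configuration. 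Once those two terms are in place, the second summand of \eqref{eq:17} is obtained verbatim by the $j\leftrightarrow k$ symmetry noted at the outset, and substituting both back into \eqref{eq:16} gives the claimed closed form.
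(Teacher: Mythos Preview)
Your decoupling step is the wrong way round, and this is not just an index swap---it breaks the structure of the integral. When $|g_k|^2>|g_j|^2$, the maximum of $\boldsymbol{\delta}_{\mathrm{low}}$ is $\bar\gamma_b/(\eta\Psi\tau_0|g_j|^2)$. Comparing this with the entry of $\boldsymbol{\delta}_{\mathrm{up}}$ for user $j$ (which has $|g_j|^2$ in its denominator) the $|g_j|^2$ cancels and you obtain $\tau_j>(\bar\gamma_b+1)\pi_j/\Psi$, which is the factor that separates and gives the $\Gamma\!\big(L,(\bar\gamma_b+1)\lambda_j\pi_j/\Psi\big)/(L-1)!$ term in \eqref{eq:17}. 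For user $k$, however, the denominator carries $|g_k|^2$, so the comparison leaves the ratio $|g_k|^2/|g_j|^2$ in place and yields $\tau_k>\big(\bar\gamma_b\,|g_k|^2/|g_j|^2+1\big)\pi_k/\Psi$; this constraint does \emph{not} decouple from the backscatter gains. Consequently the joint factor you call (ii) must be taken over the four variables $(\tau_0,\tau_k,|g_j|^2,|g_k|^2)$, exactly as in the paper's \eqref{eq:18}--\eqref{eq:19}.

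This matters because the coupled $\tau_k$-constraint is what produces the rational factor $(x+\beta_k)^{-(n-q+1)}$ in the $x=|g_j|^2$ integral: expanding $1-F_{\tau_k}$ via \eqref{eq:15} gives the $n$-sum, the substitution $z=y\bar\gamma_b/x+1$ turns the $y$-integral into an upper incomplete Gamma whose finite expansion gives the $q$-sum and the denominator $(x+\beta_k)^{n-q+1}$. That rational factor is then represented as the $\mathrm{H}^{1,1}_{1,1}$ piece, and only together with the two exponentials $\exp(-\beta_{j,k}x)$ and $\exp(-\beta_0/x)$ does the Mellin--Barnes collapse yield a genuine \emph{bivariate} Fox-$\mathrm H$ via \cite[eq.~2.3]{Mittal1972}. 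Under your proposed decoupling the residual integral is merely $\int_0^\infty z^{m-1}\exp(-az-b/z)\,dz$, i.e.\ a Bessel-$K$ (a univariate Fox-$\mathrm H$), so you cannot reach the $\mathrm{H}^{0,1;1,1;0,1}_{1,0;1,1;1,0}$ structure of \eqref{eq:17} along that route.
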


\begin{proof}
From  \eqref{eq:16}, the first probability (the second probability can be done similarly by swapping the position of $j$ and $k$ with each other) can be arranged in an elegant form as  
\begin{align}
\label{eq:18}
    \Phi &=\Pr\left[
         \tau_0|g_j|^2 \ge \frac{\bar{\gamma}_{b}}{ \eta\Psi},
       \tau_k \ge  \left(\frac{|g_k|^2}{|g_j|^2}\bar{\gamma}_{b} + 1\right)\frac{\pi_k}{\Psi},
       \frac{|g_k|^2}{|g_j|^2}>1
    \right]\nonumber\\
    &\quad\times\Pr\left[\tau_j  \ge ( \bar{\gamma}_{b} + 1)\frac{\pi_j}{\Psi}\right].
\end{align}
To proceed, making use of \eqref{eq:15} and the PDF of $|g_k|^2$, the first probability in  \eqref{eq:18} can be computed as
\begin{align}
\label{eq:19}
    I_1 &= \int\limits_{0}^{\infty}\left[1-F_{\tau_0}\left(\frac{\bar{\gamma}_{b}}{x\eta\Psi}\right)\right]\int\limits_{x}^{\infty}\left[1-F_{\tau_k}\left(\left[\frac{y}{x}\bar{\gamma}_{b}+1\right]\frac{\pi_k}{\Psi}\right)\right]\nonumber\\
    &\quad\times f_{|g_j|^2}(x)f_{|g_k|^2}(y)dxdy\nonumber\\
    %%%%%%%%%%%%%%%%
    % &=\int\limits_{0}^{\infty} \sum_{m=0}^{L-1}\frac{\beta_0^m}{m!}x^{-m} \exp\left(-\frac{\beta_0}{x}\right)\Omega_j\exp(-\Omega_jx)\nonumber\\
    % &\quad\times
    % \int\limits_{x}^{\infty}\sum_{n=0}^{L-1}\frac{1}{n!}\left(\left[\frac{y}{x}\bar{\gamma}_{b}+1\right]\frac{\lambda_k\pi_k}{\Psi}\right)^n \Omega_k\exp(-\Omega_ky)\nonumber\\
    % &\quad\times\exp\left(-\left[\frac{y}{x}\bar{\gamma}_{b}+1\right]\frac{\lambda_k\pi_k}{\Psi}\right)dydx\nonumber\\
    %%%%%%%%%%%%%%%%
    &{=}\sum_{m=0}^{L-1}\sum_{n=0}^{L-1}\frac{\Omega_j\beta_0^m\Omega_k}{m! n!\bar{\gamma}_{b}} \left(\frac{\lambda_k\pi_k}{\Psi}\right)^n\int\limits_{0}^{\infty} \exp\left(-\frac{\beta_0}{x} + \frac{\Omega_k}{\bar{\gamma}_{b}}x \right)\nonumber\\
    &\quad
     \times\frac{\exp\left(-\Omega_jx \right) }{x^{m-1}}\left[\int_{\bar{\gamma}_{b}+1}^{\infty}\hspace{-7.5 pt}z^n \exp\left(  -\frac{\Omega_k}{\bar{\gamma}_{b}}\left[x+\beta_k\right] z\right)\! dz\right]\! dx\nonumber\\
    %%%%%%%%%%%%%%%%
    &{=}\sum_{m=0}^{L-1}\sum_{n=0}^{L-1}\sum_{q=0}^{n}\frac{\Omega_j\beta_0^m}{m! q!}\beta_k^n \left(\frac{\bar{\gamma}_{b}+1}{\bar{\gamma}_{b}/\Omega_k}\right)^q\!\exp\left( -\beta_k\frac{\bar{\gamma}_{b}+1}{\bar{\gamma}_{b}/\Omega_k}\right) \nonumber\\
    &\quad\times\underbrace{\int_{0}^{\infty} \frac{x^{-m+1}}{\left(x+\beta_k\right)^{n-q+1}} \exp\left(-\frac{\beta_0}{x}-\beta_{j,k}x \right)
      dx}_{\triangleq\; \Xi},
  \end{align}  
where second step is derived by using variable transform $z = y\bar{\gamma}_{b}/x +1$ and last step is obtained thanks to \cite[eq. (351.2)]{integral}. 

Direct assessing $\Xi$ is not straightforward. Thereon, we propose to address this problem by conjuring the three following transformations into the Fox-H functions \cite[Appendix A7 and eq. (1.2.2)]{ismail1980h}: $\exp(-z)  = \mathrm{H}^{1,0}_{0,1}\big[z \big|\begin{smallmatrix}-\\  (0,1) \end{smallmatrix} \big]$, $\exp\left(-\frac{1}{z}\right)   = \mathrm{H}^{0,1}_{1,0}\big[z \big|\begin{smallmatrix} (1,1)\\ - \end{smallmatrix} \big]$, and $\left(1 + cz\right)^{-m} = \frac{1}{\Gamma(m)}\mathrm{H}^{1,1}_{1,1}\big[\frac{z}{c} \big|\begin{smallmatrix} (1-m,1)\\(0,1) \end{smallmatrix}\big]$. 
It is due to the highly general nature of this function representing plenty of elementary functions, lending itself to concise integral formulas.
Therewith, $\Xi$ can be rewritten as
\begin{align}
\label{eq:20}
  \Xi   
    &=\frac{\beta_k^{-(n-q+1)}}{\Gamma(n-q+1)}\int_{0}^{\infty} x^{-m+1}    \mathrm{H}^{1,0}_{0,1}\left(\beta_{j,k} x \left|\begin{matrix}
          -\\
          (0,1)
      \end{matrix} \right.\right)\nonumber\\
      &\quad \mathrm{H}^{1,1}_{1,1}\left(\frac{1}{\beta_k}x \left|\begin{matrix}
          (q- n,1)\\
          (0,1)
      \end{matrix}\right. \right)\mathrm{H}^{0,1}_{1,0}\left(\frac{1}{\beta_0}x \left|\begin{matrix}
          (1,1)\\
          -
      \end{matrix}\right. \right)dx.
      %%%%%%
      % &=\frac{\beta_k^{-(n-q+1)}\beta_{j,k}^{m-2}}{\Gamma(n-q+1)} \mathrm{H}^{0,1:1,1:0,1}_{1,0:1,1:1,0}\left(\begin{matrix}
      %     m-1:1,1\\
      %     -
      % \end{matrix} 
      % \left|\begin{matrix}
      %     (q- n,1)\\
      %     (0,1)
      % \end{matrix} \right.
      % \left|\begin{matrix}
      %     (1,1)\\
      %     -
      % \end{matrix} \right. \left| \frac{1}{\beta_{k}\beta_{j,k}};\frac{1}{\beta_{0}\beta_{j,k}} \right. \right)
\end{align}
It is exciting to know that the integral in \eqref{eq:20} can be rewritten in terms of the bi-variate Fox-H function \cite[eq. 2.3]{Mittal1972}.

Meanwhile, the second probability in \eqref{eq:18} is derived as
\begin{align}
    I_2 = 1- F_{\tau_j}\left(( \bar{\gamma}_{b} + 1)\frac{\pi_j}{\Psi}\right).
\end{align}

Pulling all the results of $I_1$ and $I_2$ into \eqref{eq:18} and \eqref{eq:16}, we obtain the outcome in \eqref{eq:17} after some manipulations.   
\end{proof}

\section{Numerical Results and Discussions}\label{sec4}
This section provides a few numerical results to verify the information-theoretic framework developed in Sec.~\ref{sec3}, using fixed parameters\footnote{It should be noted that the values of parameters were chosen randomly to show the reality of the results obtained. Thus, based on specific situations, these values can be differently standardized to reflect the required outcomes.}: $\lambda_0 =\Omega_2 =0.25,\lambda_1 =\Omega_1 =0.5, \lambda_2=0.75$,  $\mu = 0.8$, $\alpha_1 = 0.6 \alpha_c$, and $\alpha_2 = 0.4 \alpha_c$. 

Provided that $R_c=0.5$ bps/Hz, $R_1 = R_b = 1$ bps/Hz, $R_2 = 1.5$ bps/Hz, and $\alpha_c = 0.5$. For different transmitting SNR $\Psi$, Fig.~\ref{fig2}(a) confirms an excellent agreement between the theoretical result using \eqref{eq:17} and the actual outcome. As expected from Remark~\ref{remark2}, the SCS provides the worst SOP performance while CCS and MCS are the best with the relative SOP performance to each other. Compared to the benchmark schemes, where the reflection coefficient $\delta$ is fixed with $\delta= 0.3, 0.8$, and the respective SOPs are evaluated based on \eqref{eq:12} and \eqref{eq:13} with $\mathrm{SOP} = 1 -\prod_k\Pr[ \gamma_{k,c} > \bar{\gamma}_{c}, \gamma_{k,k} > \bar{\gamma}_{k}, \gamma_{k,b} > \bar{\gamma}_{b}]$, the findings observed in Fig.~\ref{fig2} show significant SOP improvement as $\Psi$ is increased. All these observations shed light on the efficacy of the developed information-theoretic framework and the proposed GC strategies. 

Next, for different the number of antennas $L$ and transmit SNR, the findings observed in Fig.~\ref{fig3} show that as $L$ increases from $3$ to $6$, the SOP improvement is strongly pronounced with over $3$-dB improvements of $\Psi$ at $\mathrm{SOP }=10^{-3}$. This means the system can save over half of the transmit power $p$. 

Finally, Fig.~\ref{fig4} plots the SOP as a function of rate requirements subject to $L = 4$. As observed, different rate targets produce different SOP trends. Even though Cases 1 and 2 have higher settings of $R_c$ and $R_k$ than Case 3, they produce a better SOP improvement due to a smaller requirement of $R_b$. Compared to Case 3, Case 1 achieves $5$ dB improvement of $\Psi$ while Case 2 offers $2.5$ dB. 
\begin{figure}[t!]
\centering
    \includegraphics[width = 0.869\linewidth]{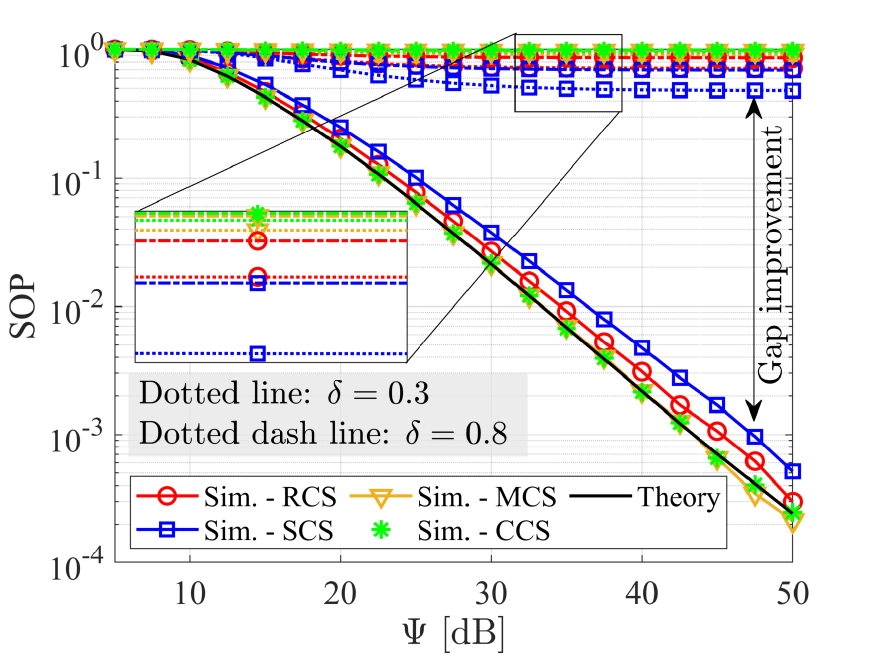}
    \caption{Impact of $\Psi$.}
    \label{fig2}
    \includegraphics[width = 0.869\linewidth]{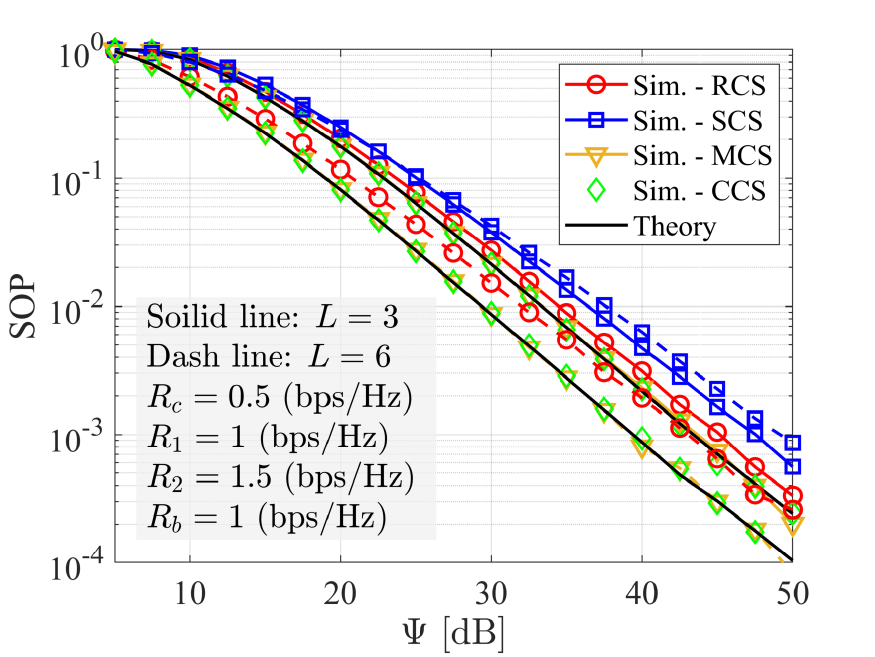}
    \caption{Impact of $K$.}
    \label{fig3}
    \includegraphics[width = 0.869\linewidth]{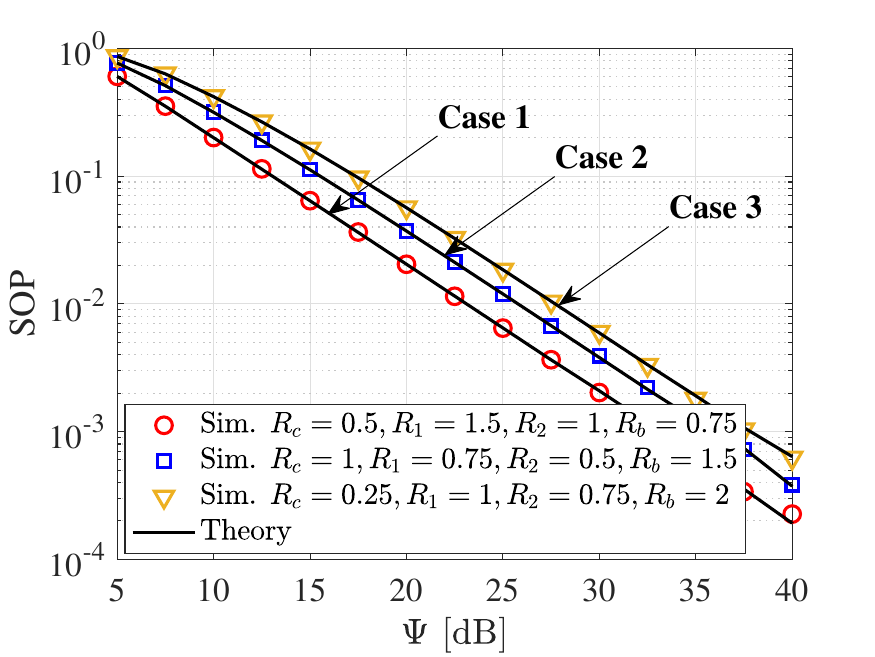}
    \caption{Impact of $R_b,R_k$, and $R_b$.}
    \label{fig4}
\end{figure}

\section{Conclusions and Outlooks}\label{sec5}
A novel paradigm shift of symbiotic backscatter RSMA systems was presented in this contribution, with a series of beamforming weight designs and four potential strategies for system performance enhancement. In particular, a generalized information-theoretic performance framework with a new metric, SOP, had been provided. Numerical results not only validated the developed framework by Monte-Carlo simulations but also showed the promise of the proposed paradigm to the next-generation wireless communication deployment. 

As this is the first proposal, there are still many rooms to be explored in next future. For example, how to allocate the resources to achieve the minimum SOP? How to reduce the impact of imperfect SIC on the performance? How much ergodic capacity can this paradigm deliver? Is there any way to reduce the computation complexity of beamforming designs to deal with real-time response?  These raised questions will open up exciting new directions for future investigations.

% 
% Generated by IEEEtran.bst, version: 1.14 (2015/08/26)

\end{document}